\newcommand{\abs}[1]{{\left\vert #1 \right\vert}}
\let\oldendproof\endproof
\def\endproof{\qed\oldendproof}
\newtheorem {obs} {Observation}
\newcommand{\personalremark}[3]{}
\newcommand{\david}[2][says]{\personalremark{David}{#1}{#2}}
\newcommand{\maarten}[2][says]{\personalremark{Maarten}{#1}{#2}}
\title{Listing All Maximal Cliques in Sparse Graphs in Near-optimal Time}
\author{David Eppstein, Maarten L\"offler, and Darren Strash}
\institute{Department of Computer Science, University of California, Irvine, USA}
\begin{document}
\maketitle

\begin{abstract}
The \emph{degeneracy} of an $n$-vertex graph $G$ is the smallest number $d$ such that every subgraph of $G$ contains a vertex of degree at most $d$.
We show that there exists a nearly-optimal fixed-parameter
tractable algorithm for enumerating all maximal cliques, parametrized by degeneracy. To achieve this result,
we modify the classic Bron--Kerbosch algorithm and show that it
runs in time $O(dn3^{d/3})$. We also provide matching upper and lower bounds showing that the largest possible number of maximal cliques in an $n$-vertex graph with degeneracy~$d$ (when $d$ is a multiple of 3 and $n\ge d+3$) is $(n-d)3^{d/3}$. Therefore, our algorithm matches the $\Theta(d(n-d)3^{d/3})$
worst-case output size of the problem whenever $n-d=\Omega(n)$.
\end{abstract}

\section{Introduction}
Cliques, complete subgraphs of a graph, are of great importance in many applications. In social networks
%  analysis (the application from which the word ``clique'' comes~\cite{luce49}),
cliques may represent closely connected clusters of actors~\cite{BerKoMoy-WAO-04,DuWuPei-WMSNA-07,harary57,luce49} and may be used as features in exponential random graph models
%, a class of probability distributions on graphs used
for statistical analysis of social networks~\cite{eppstein2009,frank91,frank86,robins2007,wasserman96}. In bioinformatics, clique finding procedures have been used to detect structural motifs from protein similarities~\cite{grindley93,koch2001,KocLenWan-JCB-96}, to predict unknown protein structures~\cite{samudrala98}, and to determine the docking regions where two biomolecules may connect to each other~\cite{gardiner99}. Clique finding problems also arise in document clustering~\cite{auguston70}, in the recovery of depth from stereoscopic image data~\cite{HorSko-PAMI-89}, in computational topology~\cite{Zom-SoCG-10}, and in e-commerce, in the discovery of patterns of items that are frequently purchased together~\cite{ZakParOgi-KDD-97}.

Often, it is important to find not just one large clique, but all \emph{maximal cliques}.
Many algorithms are now known for this problem~\cite{akkoyunlu73,bron73,cazals2008,chiba85,chrobak91,gerhards79,harary57,johnston75,makino2004,mulligan72,tomita2006} and for the complementary problem of finding maximal independent sets~\cite{Epp-TALG-09,johnson88,lawler80,loukakis81,tsukiyama77}. One of the most successful in practice is the \emph{Bron--Kerbosch algorithm}, a
simple backtracking
procedure that recursively solves subproblems specified by three sets of vertices: the vertices that are required to be included in a partial clique, the vertices that are to be excluded from the clique, and some remaining vertices whose status still needs to be determined~\cite{bron73,cazals2008,johnston75,koch2001,tomita2006}.

All maximal cliques can be listed in polynomial time per clique~\cite{lawler80,tsukiyama77} or in a total time proportional to the maximum possible number of cliques in an $n$-vertex graph, without additional polynomial factors~\cite{Epp-JGAA-03,tomita2006}. In particular, a variant of the Bron--Kerbosch algorithm is known to be optimal in this sense~\cite{cazals2008,tomita2006}. Unfortunately this maximum possible number of cliques is exponential~\cite{moon-moser65}, so that all general-purpose algorithms for listing maximal cliques necessarily take exponential time.

We are faced with a dichotomy between theory, which states categorically that clique finding takes exponential time, and practice, according to which clique finding is useful and can be efficient in its areas of application. One standard way of resolving dilemmas such as this one is to apply \emph{parametrized complexity}~\cite{DowFel-99}: one seeks a parameter of instance complexity such that instances with small parameter values can be solved quickly. A parametrized problem is said to be \emph{fixed-parameter tractable} if instances with size $n$ and parameter value $p$ can be solved in a time bound of the form $f(p) n^{O(1)}$, where $f$ may grow exponentially or worse with~$p$ but is independent of~$n$. With this style of analysis, instances with a small parameter value are used to model problems that can be solved quickly, while instances with a large parameter value represent a theoretical worst case that, one hopes, does not arise in practice.

The size of the largest clique does not work well as a parameter: the maximum clique problem, parametrized by clique size, is hard for
${\mathbf W}[1]$, implying that it is unlikely to have a fixed-parameter
tractable algorithm~\cite{downey95}, and Tur\'an graphs\ $K_{\frac{n}{k},\frac{n}{k},\frac{n}{k},\dots}$ have $(n/k)^k$ maximal cliques of size $k$ forcing any algorithm that lists them all to take time larger than any fixed-parameter-tractable bound. However, clique size is not the only parameter one can choose. In this paper, we study maximal clique finding parametrized by \emph{degeneracy}, a frequently-used measure of the sparseness of a graph that is closely related to other common sparsity measures such as arboricity and thickness, and that has previously been used for other fixed-parameter problems~\cite{AloGut-Algo-09,CaiChaCha-IWPEC-06,GolVil-WG-08,KloCai-ICS-00}. We are motivated by the fact that sparse graphs often appear in practice. For instance, the World Wide Web graph, citation networks, and collaboration graphs have low arboricity~\cite{goel2006}, and therefore have low degeneracy. Empirical evidence also suggests that the $h$-index, a measure of sparsity that upper bounds degeneracy, is low for social networks~\cite{eppstein2009}. As we show in Appendix~\ref{sec:appendix}, protein--protein interaction networks have low degeneracy as well. Furthermore, planar graphs have degeneracy at most five~\cite{LicWhi-CJM-70}, and the Barab\'asi--Albert model of preferential attachment~\cite{barabasi99}, frequently used as a model for large scale-free social networks, produces graphs with bounded degeneracy.  
 We show that:
\begin{itemize}
\item A variant of the Bron--Kerbosch algorithm, when applied to $n$-vertex graphs with degeneracy $d$, lists all maximal cliques in time $O(dn3^{d/3})$.
\item Every $n$-vertex graph with degeneracy $d$ (where $d$ is a multiple of three and $n\ge d+3$) has at most $(n-d)3^{d/3}$ maximal cliques, and there exists an $n$-vertex graph with degeneracy $d$ that has exactly $(n-d)3^{d/3}$ maximal cliques. Therefore, our variant of the Bron--Kerbosch algorithm is optimal in the sense that its time is within a constant of the parametrized worst-case output size.
\end{itemize}
Our algorithms are fixed-parameter tractable, with a running time of the form $O(f(d)n)$ where $f(d)=d3^{d/3}$. Algorithms for listing all maximal cliques in graphs of constant degeneracy in time $O(n)$ were already known~\cite{chiba85,chrobak91}, but these algorithms had not been analyzed for their dependence on the degeneracy of the graph. We compare the parametrized running time bounds of the known alternative algorithms to the running time of our variant of the Bron--Kerbosch algorithm, and we show that the Bron--Kerbosch algorithm has a much smaller dependence on the parameter~$d$. Thus we give theoretical evidence for the good performance for this algorithm that had previously been demonstrated empirically.

\section{Preliminaries}
We work with an undirected graph $G=(V,E)$, which we assume
is stored in an adjacency list data structure. We let $n$ and $m$
be the number of vertices and edges of $G$, respectively. For a
vertex $v$, we define $\Gamma(v)$ to be the set $\{w\mid (v,w)\in E\}$, which
we call the \emph{neighborhood} of $v$, and similarly for a subset $W \subset V$ we define $\Gamma(W)$ to be the set $\bigcap_{w \in W} \Gamma(w)$, which is the common neighborhood of all vertices in $W$.

\subsection{Degeneracy}
Our algorithm is parametrized on the \emph{degeneracy} of a graph,
a measure of its sparsity.

\begin{definition}[degeneracy]
The degeneracy of a graph $G$ is the smallest value $d$ such that every nonempty subgraph of $G$ contains a vertex of degree at most $d$~\cite{LicWhi-CJM-70}.
\end{definition}

\tweeplaatjes {graph-classic} {graph-ordered} {(a) A graph with degeneracy $3$.
%, caused by the clique consisting of vertices $D$, $E$, $H$, and $I$.
(b) A vertex ordering showing that the degeneracy is not larger than $3$.}

Figure~\ref {fig:graph-classic} shows an example of a graph of degeneracy $3$.
Degeneracy is also known as the $k$-core number~\cite{batagelj2003}, width~\cite{freuder82}, and linkage~\cite{kirousis96} of a graph 
and is one less than the coloring number~\cite{erdos66}.
In a graph of degeneracy $d$, the maximum clique size can be at most $d+1$, for any larger clique would form a subgraph in which all vertices have degree higher than $d$.

If a graph has degeneracy $d$, then it has a \emph{degeneracy ordering}, an ordering such that each vertex has $d$ or fewer neighbors that come later in the ordering.
Figure~\ref {fig:graph-ordered} shows a possible degeneracy ordering for the example.
Such an ordering may be formed from $G$ by repeatedly removing a vertex of degree $d$ or less: by the assumption that $G$ is $d$-degenerate, at least one such vertex exists at each step. Conversely, if $G$ has an ordering with this property, then it is $d$-degenerate, because for any subgraph $H$ of $G$, the vertex of $H$ that comes first in the ordering has $d$ or fewer neighbors in $H$. Thus, as Lick and White~\cite{LicWhi-CJM-70} showed, the degeneracy may equivalently be defined as the minimum $d$ for which a degeneracy ordering exists. A third, equivalent definition is that $d$ is the minimum value for which there exists an orientation of $G$ as a directed acyclic graph in which all vertices have out-degree at most $d$~\cite{chrobak91}: such an orientation may be found by orienting each edge from its earlier endpoint to its later endpoint in a degeneracy ordering, and conversely if such an orientation is given then a degeneracy ordering may be found as a topological ordering of the oriented graph.

Degeneracy is a robust measure of sparsity: it is
within a constant factor of other popular measures of sparsity
including arboricity and thickness.
In addition, degeneracy, along with a degeneracy ordering,
can be computed by a simple greedy strategy of repeatedly removing a vertex
with smallest degree (and its incident edges) from the graph until it is
empty. The degeneracy is the maximum of the degrees of the vertices at the time they
are removed from the graph, and the degeneracy ordering is the order
in which vertices are removed from the graph~\cite{jensen95}. The easy computation of degeneracy has made it a useful tool in algorithm design and analysis~\cite{chrobak91,Epp-TALG-09}.

%This algorithm can be made to run in time $O(n+m)$, by partitioning
%the vertices by degree. 
%We maintain
%an integer $d$, the maximum degree of any vertex already removed.
%We can see that the value $d$ is in fact the degeneracy, as follows.
%Let $v$ be the first vertex that has degree $d$ when it is removed
%from the graph. Just before $v$ is removed, all vertices in the graph
%have degree at least $d$. One vertex from this set must come before
%the others in any ordering of the vertices and therefore this vertex
%has at least $d$ later neighbors.
We can implement this algorithm in $O(n+m)$ time by maintaining an array $D$,
where $D[i]$ stores a list of vertices of degree $i$ in the graph~\cite{batagelj2003}. To remove a vertex 
of minimum degree from the graph, we scan from the beginning of the array until we 
reach the first nonempty list, remove a vertex from this list, and then update its neighbors' 
degrees and move them to the correct lists. Each vertex removal step takes time proportional to the degree
of the removed vertex, and therefore the algorithm takes linear time.

By counting the maximum possible number of edges from each vertex to later neighbors, we get the following bound on the number of edges of a $d$-degenerate graph:

\begin{lemma}[Proposition 3 of \cite{LicWhi-CJM-70}] A graph $G=(V,E)$ with degeneracy $d$ has at most
$d(n-\frac{d+1}{2})$ edges.
\end{lemma}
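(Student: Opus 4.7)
The plan is to exploit the degeneracy ordering guaranteed by the definition of degeneracy (and discussed just before the lemma statement). Let $v_1,v_2,\ldots,v_n$ be a degeneracy ordering of $G$, so that each $v_i$ has at most $d$ neighbors among $v_{i+1},\ldots,v_n$. I would orient every edge from its earlier endpoint to its later endpoint in this ordering; since every edge then has a unique tail, the total number of edges equals the sum of the out-degrees.

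Next I would bound each out-degree in two ways: the out-degree of $v_i$ is at most $d$ by the defining property of a degeneracy ordering, and it is also trivially at most $n-i$, the number of vertices that come after $v_i$. Thus
\begin{equation*}
\abs{E} \;\le\; \sum_{i=1}^{n}\min(d,\,n-i).
\end{equation*}
For $i\le n-d$ the minimum equals $d$, contributing $(n-d)d$ in total, while for $i=n-d+1,\ldots,n$ the minimum equals $n-i$, contributing $0+1+\cdots+(d-1)=\binom{d}{2}$.

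Finally I would simplify the sum:
\begin{equation*}
\abs{E} \;\le\; d(n-d) + \tfrac{d(d-1)}{2} \;=\; d\!\left(n - \tfrac{d+1}{2}\right),
\end{equation*}
which is the desired bound. There is no real obstacle here; the only thing to get right is the case split at $i=n-d$, which is what accounts for the $-\tfrac{d+1}{2}$ correction term (as opposed to the looser bound $dn$ one would get by using the degeneracy bound alone).
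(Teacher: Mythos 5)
Your proof is correct and follows the same route the paper indicates (and that Lick and White use): count edges from each vertex to its later neighbors in a degeneracy ordering, bounding each count by $\min(d,\,n-i)$. The arithmetic of the case split checks out, yielding exactly $d(n-d)+\binom{d}{2}=d\left(n-\tfrac{d+1}{2}\right)$.
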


\subsection{The Bron--Kerbosch Algorithm}
The Bron--Kerbosch algorithm~\cite{bron73} is a widely used algorithm
for finding all maximal cliques in a graph. It is a
recursive backtracking algorithm which is easy to understand,
easy to code, and has been shown to work well in practice.

A recursive call to the Bron--Kerbosch algorithm provides three disjoint sets of vertices $R$, $P$, and $X$ as arguments, where $R$ is a (possibly non-maximal) clique and $P\cup X = \Gamma(R)$ are the vertices that are adjacent to every vertex in $R$. The vertices in $P$ will be considered to be added to clique $R$, while those in $X$ must be excluded from the clique; thus, within the recursive call, the algorithm lists all cliques in $P\cup R$ that are maximal within the subgraph induced by $P\cup R\cup X$.
The algorithm chooses a candidate $v$ in $P$ to add to the clique $R$, and
makes a recursive call in which $v$ has been moved from $R$ to $P$; in this recursive call, it restricts $X$ to the neighbors of $v$, since non-neighbors cannot affect the maximality of the resulting cliques. When the recursive call returns, $v$ is moved to $X$ to eliminate redundant work by further calls
to the algorithm. When the recursion reaches a level at which $P$ and $X$ are empty, $R$ is a
maximal clique and is reported (see Fig.~\ref{figure:bkalg}). To list all maximal cliques in the graph, this recursive algorithm is called with $P$ equal to the set of all vertices in the graph and with $R$ and $X$ empty.

\begin{figure}
{\bf proc} BronKerbosch($P$, $R$, $X$)
\begin{algorithmic}[1]
\IF{$P\cup X = \emptyset$}
    \STATE report $R$ as a maximal clique
\ENDIF
\FOR{ {\bf each} vertex $v\in P$}
    \STATE BronKerbosch($P\cap \Gamma(v)$, $R\cup\{v\}$, $X\cap \Gamma(v)$)
    \STATE $P \leftarrow P \setminus \{v\}$
    \STATE $X \leftarrow X \cup \{v\}$
\ENDFOR
\end{algorithmic}

\bigskip

{\bf proc} BronKerboschPivot($P$, $R$, $X$)
\begin{algorithmic}[1]
\IF{$P\cup X = \emptyset$}
    \STATE report $R$ as a maximal clique
\ENDIF
\STATE choose a pivot $u \in P\cup X$  \COMMENT{Tomita et al. choose $u$ to maximize $\abs{P\cap\Gamma(u)}$}
\FOR{ {\bf each} vertex $v\in P\setminus \Gamma(u)$}
    \STATE BronKerboschPivot($P\cap \Gamma(v)$, $R\cup\{v\}$, $X\cap \Gamma(v)$)
    \STATE $P \leftarrow P \setminus \{v\}$
    \STATE $X \leftarrow X \cup \{v\}$
\ENDFOR
\end{algorithmic}

\caption{The Bron--Kerbosch algorithm without and with pivoting}
\label{figure:bkalg}
\end{figure}
Bron and Kerbosch also describe a heuristic called
\emph{pivoting}, which limits the number of recursive
calls made by their algorithm. The key observation is that
for any vertex $u$ in $P\cup X$, called a \emph{pivot}, any maximal clique must
contain one of $u$'s non-neighbors (counting $u$ itself as a non-neighbor). Therefore,
we delay the vertices in $P\cap\Gamma(u)$ from being
added to the clique until future recursive calls, with the benefit
that we make fewer recursive calls.
Tomita et al.~\cite{tomita2006} show that choosing the pivot $u$
from $P\cup X$ in order to maximize $\abs{P \cap \Gamma(u)}$ guarantees that
the Bron--Kerbosch algorithm has worst-case running time $O(3^{n/3})$,
excluding time to write the output, which is worst-case optimal.

\section{The Algorithm}

In this section, we show that apart from the pivoting strategy, the order in which the vertices of $G$ are processed by the Bron--Kerbosch algorithm is also important. By choosing an ordering carefully, we develop a variant of the Bron--Kerbosch algorithm that correctly lists all maximal cliques in time $O(dn3^{d/3})$. Essentially, our algorithm performs the outer level of recursion of the Bron--Kerbosch algorithm without pivoting, using a degeneracy ordering to order the sequence of recursive calls made at this level, and then switches at inner levels of recursion to the pivoting rule of Tomita et al.~\cite{tomita2006}.

%\subsection {Bron--Kerbosch and Vertex Order}

In the original Bron--Kerbosch algorithm, in each recursive call the vertices in $P$ are considered for expansion one by one (see line 4 of BronKerbosch in Figure~\ref {figure:bkalg}). The order in which the vertices are treated is not specified. We first analyze what happens if we fix an order $v_1, v_2, \ldots, v_n$ on the vertices of $V$, and use the same order consistently to loop through the vertices of $P$ in each recursive call of the non-pivoting version of BronKerbosch.

\begin {obs} \label {obs:XP-order}
When processing a clique $R$ in the ordered variant of Bron--Kerbosch, the common neighbors of $R$ can be partitioned into the set $P$ of vertices that come after the last vertex of $R$, and the set $X$ of remaining neighbors, as shown in Figure~\ref{fig:BK-ordered}.    \end {obs}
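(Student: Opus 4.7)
The plan is to prove this by induction on the depth of the recursive call, showing simultaneously that (i) the standard Bron--Kerbosch invariant $P\cup X=\Gamma(R)$ holds on entry to each call, (ii) every vertex of $P$ comes after the last vertex of $R$ in the fixed ordering, and (iii) every vertex of $X$ comes before the last vertex of $R$. The base case is the top-level call, where $R=\emptyset$, $P=V$, $X=\emptyset$, and the statement holds vacuously.

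For the inductive step I would fix the parent call with parameters $(P_{\text{old}},R_{\text{old}},X_{\text{old}})$ satisfying the invariant, and examine the state at the moment the for-loop reaches some vertex $v$ and recurses. First I would observe that because the loop iterates through $P_{\text{old}}$ in the fixed order $v_1,\dots,v_n$, any vertex that has already been moved from $P$ to $X$ during this call must precede $v$ in the ordering, and the vertices still remaining in $P$ must come at or after $v$. Combined with the inductive hypothesis that $X_{\text{old}}$ consists of vertices preceding the last vertex of $R_{\text{old}}$ (which in turn precedes $v$, since $v\in P_{\text{old}}$), this shows that at the moment of the recursive call, the current $X$ contains only vertices preceding $v$, and the current $P\setminus\{v\}$ contains only vertices following $v$.

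Then I would verify the three conditions for the child call, whose parameters are $R_{\text{new}}=R_{\text{old}}\cup\{v\}$, $P_{\text{new}}=P\cap\Gamma(v)$, $X_{\text{new}}=X\cap\Gamma(v)$. Since $v$ follows every vertex of $R_{\text{old}}$, it is the last vertex of $R_{\text{new}}$; the previous paragraph then gives (ii) and (iii) immediately after intersecting with $\Gamma(v)$. For (i), we have $P_{\text{new}}\cup X_{\text{new}}=(P\cup X)\cap\Gamma(v)=\Gamma(R_{\text{old}})\cap\Gamma(v)=\Gamma(R_{\text{new}})$ by the inductive hypothesis on the parent.

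The main obstacle, such as it is, is bookkeeping: the set $X$ seen by a recursive call is built up from two sources (vertices inherited from $X_{\text{old}}$ via intersection with $\Gamma(v)$, and vertices moved from $P$ to $X$ earlier in the parent's loop), and one must argue uniformly that both sources contribute only vertices preceding the last vertex of $R_{\text{new}}$. Once this is separated out, the statement is essentially a direct consequence of processing the outer loop in the fixed global order and of the already-standard loop invariant $P\cup X=\Gamma(R)$.
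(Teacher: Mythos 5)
Your proof is correct. Note, however, that the paper itself offers no proof of this Observation at all --- it is stated as self-evident --- so there is no ``official'' argument to compare against; your induction is a legitimate and complete formalization of what the authors treat as obvious. The structure you chose (strengthening the claim to the three simultaneous invariants $P\cup X=\Gamma(R)$, ``$P$ is entirely after the last vertex of $R$,'' and ``$X$ is entirely before it,'' then inducting over the recursion tree) is exactly the right way to make the statement rigorous, and your identification of the two sources feeding $X$ --- inherited vertices from $X_{\mathrm{old}}\cap\Gamma(v)$ and vertices shifted from $P$ to $X$ earlier in the parent's loop, both of which precede $v$ --- is the heart of the matter. Two small points of polish: (1) in the inductive step, when the parent is the top-level call, $R_{\mathrm{old}}=\emptyset$ has no ``last vertex,'' so the phrase ``$X_{\mathrm{old}}$ consists of vertices preceding the last vertex of $R_{\mathrm{old}}$'' needs the companion remark that in this case $X_{\mathrm{old}}=\emptyset$ and only the first source contributes; (2) to get the stated \emph{partition} (i.e., that $P$ contains \emph{all} common neighbors after the last vertex of $R$, not merely a subset of them), you should make explicit that your conditions (i)--(iii) together with the disjointness of $P$ and $X$ force every later common neighbor into $P$ and every earlier one into $X$. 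Both are one-line fixes; the argument is sound.
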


\maarten [wonders] {Should we prove this?}

\david{I think it's ok to leave it as-is. It would seem silly to make an appendix just to prove something that is (in retrospect) so obvious.}

\eenplaatje {BK-ordered} {Partitioning the common neighbors of a clique $R$ into the set $P$ of later vertices and the set $X$ of remaining neighbors.}

%   Figure~\ref {fig:BK-ordered} shows this for an example ordered graph.
%
%    For a graph of bounded degeneracy, we would like to use this property in combination with a degeneracy order, making the size of $P$ in any recursive call be at most $d$ and reducing the total running time. However, Observation~\ref {obs:XP-order} is no longer true when we use pivoting in the algorithm, because we may skip vertices of $P$ at one level which would still be there after recursing.
%    Therefore, we do not pivot at the top level of the algorithm to ensure that $P$ is small in all consequent recursive calls, but include a pivoting strategy at all the other levels.
%
%\subsection {High-level Description of the Algorithm}
%
%Based on Observation~\ref {obs:XP-order}, 
Our algorithm computes a degeneracy ordering of the given graph, and performs the outermost recursive calls in the ordered variant of the Bron--Kerbosch algorithm (without pivoting) for this ordering.
%Because we use a degeneracy ordering,
The sets $P$ passed to each of these recursive calls will have at most $d$ elements in them, leading to few recursive calls within each of these outer calls. Below the top level of the recursion we switch from the ordered non-pivoting version of the Bron--Kerbosch algorithm to the pivoting algorithm (with the same choice of pivots as Tomita et al.~\cite{tomita2006}) to further control the number of recursive calls.

\begin{figure}[tbh!]
{\bf proc} BronKerboschDegeneracy($V$, $E$)
\begin{algorithmic}[1]
\FOR{each vertex $v_i$ in a degeneracy ordering $v_0$, $v_1$, $v_2$, \dots of $(V,E)$}
    \STATE{$P\leftarrow \Gamma(v_i)\cap \{v_{i+1},\ldots, v_{n-1}\}$ }
    \STATE{$X\leftarrow \Gamma(v_i)\cap \{v_0,\ldots, v_{i-1}\}$ }
    \STATE{BronKerboschPivot($P$, $\{v_i\}$, $X$)}
\ENDFOR
\end{algorithmic}

\caption{Our algorithm.}
\end{figure}

%In the next section we describe some implementation details that allow the pivoting within the recursive calls to BronKerboschPivot to be performed efficiently.

%  \subsection {Analysis of the Algorithm}

\begin{lemma}
\label{lemma:correctBK}
The Bron--Kerbosch algorithm using the Tomita et al. pivoting strategy generates
all and only maximal cliques containing all vertices in $R$, some vertices in $P$,
and no vertices in $X$, without duplication.
\end{lemma}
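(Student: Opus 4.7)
My plan is to prove soundness, completeness, and absence of duplication together by strong induction on $|P \cup X|$. I maintain as an invariant at each recursive entry that $R$ is a clique, that $R$, $P$, and $X$ are pairwise disjoint, and that $P \cup X \subseteq \Gamma(R)$. In the base case $P \cup X = \emptyset$ the algorithm reports $R$, which is automatically maximal since no vertex can extend it, and the conditions on the output are vacuously met. The invariant is easily checked to be preserved by each recursive call on $(P \cap \Gamma(v), R \cup \{v\}, X \cap \Gamma(v))$.

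For soundness and the absence of duplication I would argue as follows. By induction, the recursive call made for a vertex $v$ outputs only maximal cliques $C$ with $R \cup \{v\} \subseteq C$, using vertices from $P \cap \Gamma(v)$ and avoiding $X \cap \Gamma(v)$; any vertex in $(P \cup X) \setminus \Gamma(v)$ is non-adjacent to $v \in C$, so cannot extend $C$, and hence $C$ remains maximal in the larger subgraph induced by $R \cup P \cup X$. For duplication, observe that once the loop moves $v$ from $P$ to $X$ it is never re-added, so a later iteration on $v' \ne v$ starts from an $X$ containing $v$; any clique it produces therefore either excludes $v$ or contains an element of its inherited $X$, contradicting the inductive hypothesis. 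Hence $C$ is produced exactly once, namely in the iteration for the first vertex of $C \setminus R$ processed by the outer loop.

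The main obstacle is completeness, specifically justifying that restricting the loop to $v \in P \setminus \Gamma(u)$ misses nothing. The key sub-claim is that every maximal clique $C$ with $R \subseteq C \subseteq R \cup P$ and $C \cap X = \emptyset$ satisfies $C \cap (P \setminus \Gamma(u)) \neq \emptyset$, under the convention that $u$ is its own non-neighbor. I would split into two cases. If $u \in X$, then $u \notin C$ while $u \in \Gamma(R)$, so maximality of $C$ forces some vertex of $C \setminus R \subseteq P$ to be non-adjacent to $u$; this vertex is the desired witness. If $u \in P$, then either $u \in C$, in which case $u$ itself witnesses the claim, or else $u \notin C$ and the same maximality argument produces a non-neighbor of $u$ inside $C \setminus R \subseteq P$. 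Let $v$ be the first such witness in loop order. Every other vertex of $C \setminus (R \cup \{v\})$ is adjacent to $v$ and has not yet been removed from $P$, so it lies in the new $P \cap \Gamma(v)$, and $C \cap X = \emptyset$ is preserved by intersection with $\Gamma(v)$. Applying the inductive hypothesis to the recursive call for $v$ yields $C$ as one of its outputs, completing the proof.
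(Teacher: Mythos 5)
The paper does not actually prove this lemma: it simply cites Tomita et al.\ for the correctness of the pivoting version of Bron--Kerbosch. Your proposal supplies a complete, self-contained inductive argument, and it is essentially correct; it is the standard way to establish this result, and having it written out is arguably more useful to a reader than the bare citation. The decomposition into soundness, non-duplication, and completeness is right, and the heart of the matter --- that every eligible maximal clique meets $P\setminus\Gamma(u)$ because a clique omitting the pivot $u$ must, by maximality, contain a non-neighbor of $u$, and that non-neighbor lies in $P$ because $u$ is adjacent to all of $R$ and the clique avoids $X$ --- is exactly the observation that justifies the pivot rule. Your ``first witness in loop order'' step and the argument that earlier-processed vertices migrate into $X$ and hence cannot reappear are also the standard and correct way to rule out both omissions and duplicates.

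One point deserves tightening. You state the invariant as $P\cup X\subseteq\Gamma(R)$, but your base case (``$R$ is automatically maximal since no vertex can extend it'') needs the reverse containment: only if \emph{every} common neighbor of $R$ lies in $P\cup X$ does $P\cup X=\emptyset$ force maximality of $R$ in the ambient graph. The clean fix is either to maintain $P\cup X=\Gamma(R)$ as the invariant (which is preserved, since the recursive call replaces $P\cup X$ by $(P\cup X)\cap\Gamma(v)=\Gamma(R\cup\{v\})$), or to state throughout --- as your soundness step already implicitly does --- that ``maximal'' means maximal within the subgraph induced by $R\cup P\cup X$, and then observe that at the top-level invocation, where $P\cup X=\Gamma(R)$, this relative maximality coincides with maximality in $G$. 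Either repair is routine, but as written the base case does not follow from the stated invariant.
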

\begin{proof} See Tomita et al.~\cite{tomita2006}.
\end{proof}

\begin{theorem} Algorithm \emph{BronKerboschDegeneracy}
generates all and only maximal cliques without duplication.
\end{theorem}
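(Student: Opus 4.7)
The plan is to reduce correctness of BronKerboschDegeneracy to Lemma~\ref{lemma:correctBK} by showing that the outer loop partitions the set of maximal cliques according to the position of each clique's earliest vertex in the degeneracy ordering, and that each outer call reports exactly the maximal cliques in its part of the partition.

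First I would fix a maximal clique $C$ of $G$ and let $v_i$ be the vertex of $C$ that comes first in the degeneracy ordering $v_0, v_1, \ldots, v_{n-1}$. I would then check that $C$ is reported during the $i$th iteration of the outer loop, by verifying the three conditions required by Lemma~\ref{lemma:correctBK} for the call \emph{BronKerboschPivot}$(P, \{v_i\}, X)$: (i) $\{v_i\} \subseteq C$ by choice of $v_i$; (ii) every other vertex of $C$ is adjacent to $v_i$ (since $C$ is a clique) and comes after $v_i$ in the ordering (since $v_i$ is earliest), hence lies in $P = \Gamma(v_i) \cap \{v_{i+1}, \ldots, v_{n-1}\}$; and (iii) no vertex of $C$ lies in $X = \Gamma(v_i) \cap \{v_0, \ldots, v_{i-1}\}$, again because $v_i$ is the earliest vertex of $C$. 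Lemma~\ref{lemma:correctBK} then guarantees that $C$ is reported during this iteration, and that it is a maximal clique of $G$ (maximality in the induced subgraph on $\{v_i\} \cup P \cup X = \{v_i\} \cup \Gamma(v_i)$ coincides with maximality in $G$, since any vertex extending $C$ would have to be adjacent to $v_i$).

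Next I would argue that no other iteration $j \neq i$ reports $C$. If $j < i$, then $v_j \notin C$ (as $v_i$ is the first vertex of $C$), so the call at step $j$, which only reports cliques containing $v_j$, cannot report $C$. If $j > i$ and $v_j \in C$, then $v_i \in X$ at step $j$ (because $v_i$ precedes $v_j$ and is adjacent to $v_j$), and by Lemma~\ref{lemma:correctBK} the call at step $j$ only reports cliques disjoint from $X$, so $C$ is not reported. If $j > i$ and $v_j \notin C$, then $v_j \notin C$ immediately rules out $C$ being reported at step $j$. Combining these two directions shows each maximal clique is reported exactly once.

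Finally, I would check the converse: every clique reported is maximal. This is immediate from Lemma~\ref{lemma:correctBK} applied to each outer call, together with the observation above that maximality within $\{v_i\} \cup \Gamma(v_i)$ is the same as maximality in $G$ for any clique containing $v_i$. The only mildly subtle point, and the one I would be most careful about, is this last observation together with the claim in step (iii) that $v_i$ being the earliest vertex of $C$ forces all other vertices of $C$ into $P$ rather than $X$; this hinges solely on the partition of $\Gamma(v_i)$ into earlier and later neighbors described in Observation~\ref{obs:XP-order}, and does not use any property of the degeneracy ordering beyond the fact that it is a linear order on $V$.
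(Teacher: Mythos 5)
Your proposal is correct and follows essentially the same approach as the paper: identify the earliest vertex $v_i$ of each maximal clique $C$ in the degeneracy ordering, invoke Lemma~\ref{lemma:correctBK} to show $C$ is reported exactly once during the iteration for $v_i$, and observe that in any later iteration for a vertex of $C$ the vertex $v_i$ lies in $X$, blocking a duplicate report. You simply spell out details the paper leaves implicit (the case analysis on $j$ versus $i$, and the equivalence of maximality in $\{v_i\}\cup\Gamma(v_i)$ with maximality in $G$), which is fine but not a different argument.
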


\begin{proof}
Let $C$ be a maximal clique, and $v$ its earliest vertex in the degeneracy order.
By Lemma~\ref{lemma:correctBK}, $C$ will be reported (once) when processing $v$.
When processing any other vertex of $C$, $v$ will be in $X$, so $C$ will not be reported.
\end{proof}

To make pivot selection fast we pass as an additional argument to BronKerboschPivot a subgraph $H_{P,X}$ of $G$ that has $P\cup X$ as its vertices; an edge $(u,v)$ of $G$ is kept as an edge in $H_{P,X}$ whenever at least one of $u$ or $v$ belongs to $P$ and both of them belong to $P\cup X$. The pivot chosen according to the pivot rule of Tomita et al.~\cite{tomita2006} is then just the vertex in this graph with the most neighbors in $P$.

\begin{lemma}
\label{lem:hpx1}
Whenever \emph{BronKerboschDegeneracy} calls \emph{BronKerboschPivot} it can form $H_{P,X}$ in time $O(d(\abs{P} + \abs{X}))$.
\end{lemma}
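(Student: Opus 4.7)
The plan is to exploit the degeneracy orientation of $G$---an acyclic orientation in which every vertex has out-degree at most $d$---which, as noted in Section~2, can be computed from a degeneracy ordering in $O(n+m)$ preprocessing time. I will assume that the algorithm stores, for each vertex, its list of at most $d$ out-neighbors under this orientation, in addition to the usual adjacency lists.

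To construct $H_{P,X}$ at a given invocation, I would first initialize an auxiliary array of size $n$ that marks each vertex of $P\cup X$ with a flag recording whether it lies in $P$ or in $X$; this takes $O(\abs{P}+\abs{X})$ time. Next, for each vertex $w\in P\cup X$, I would scan its out-neighbor list (at most $d$ entries) and, for each out-neighbor $w'$, use the marks to test whether $w'\in P\cup X$. Whenever this holds and at least one of $w,w'$ lies in $P$, insert the edge $(w,w')$ into $H_{P,X}$. Finally, clear the marks in $O(\abs{P}+\abs{X})$ time so the array is ready for reuse in the next call.

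The main correctness observation is that every edge of $H_{P,X}$ has both endpoints in $P\cup X$ and, under the acyclic degeneracy orientation, is oriented from its earlier to its later endpoint; its tail therefore belongs to $P\cup X$, and the edge is discovered exactly once when that tail's out-neighbor list is scanned. The ``at least one endpoint in $P$'' side condition is verified directly against the marks. The running time is dominated by the scan, which inspects at most $d$ out-edges for each of the $\abs{P}+\abs{X}$ vertices, for a total of $O(d(\abs{P}+\abs{X}))$ as claimed. I do not anticipate a substantive obstacle; the only point requiring a moment's thought is to notice that iterating over the out-neighbors alone---not the potentially enormous in-neighbors---of vertices in $P\cup X$ already recovers every edge of $H_{P,X}$, thanks to the acyclicity of the orientation.
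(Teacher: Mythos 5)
Your proof is correct and follows essentially the same approach as the paper's: both exploit the fact that each edge of $H_{P,X}$ appears among the at most $d$ out-edges (in the degeneracy orientation) of one of its endpoints, both of which lie in $P\cup X$, so a scan of those out-neighbor lists with membership tests suffices. Your version merely adds implementation detail (the marking array and the explicit uniqueness-of-discovery observation) that the paper leaves implicit.
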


\begin{proof}
The vertex set of $H_{P,X}$ is known from $P$ and $X$.
Each edge is among the $d$ outgoing edges from each of its vertices.
Therefore, we can achieve the stated time bound by looping over each of the $d$ outgoing edges from each vertex in $P\cup X$ and testing whether each edge meets the criterion for inclusion in $H_{P,X}$.
\end{proof}

The factor of $d$ in the time bound of Lemma~\ref{lem:hpx1} makes it too slow for the recursive part of our algorithm. Instead we show that the subgraph to be passed to each recursive call can be computed quickly from the subgraph given to its parent in the recursion tree.

\begin{lemma}
\label{lem:hpx2}
In a recursive call to \emph{BronKerboschPivot} that is passed the graph $H_{P,X}$ as an auxiliary argument, the sequence of graphs $H_{P\cap\Gamma(v),X\cap\Gamma(v)}$ to be passed to lower-level recursive calls can be computed in total time $O(|P|^2(|P|+|X|))$.
\end{lemma}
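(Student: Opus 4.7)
The plan is to construct each $H_{P\cap\Gamma(v),X\cap\Gamma(v)}$ by filtering the parent graph $H_{P,X}$, never consulting $G$ itself---consulting $G$ is exactly what forces the expensive factor of $d$ in Lemma~\ref{lem:hpx1} and is what the new bound must avoid.

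The first step is a small structural claim. Writing $P' = P\cap\Gamma(v)$ and $X' = X\cap\Gamma(v)$, I would show that the edge set of $H_{P',X'}$ is exactly the set of edges of $H_{P,X}$ whose two endpoints both lie in $\Gamma(v)$. The containment $E(H_{P',X'})\subseteq E(H_{P,X})$ is immediate from $P'\subseteq P$ and $X'\subseteq X$. In the other direction, any edge of $H_{P,X}$ whose two endpoints are both in $\Gamma(v)$ automatically has its mandatory $P$-endpoint in $P\cap\Gamma(v)=P'$, so the ``at least one endpoint in $P'$'' side-condition for membership in $H_{P',X'}$ comes for free. This observation---that restricting to $\Gamma(v)$ inside $H_{P,X}$ does all the work---is the main conceptual point I expect to be the obstacle; the rest is bookkeeping.

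The second step exploits the fact that $v\in P$, so the neighbors of $v$ inside $H_{P,X}$ are already precisely $\Gamma(v)\cap(P\cup X) = P'\cup X'$. For each $v$ in the loop I would walk $v$'s adjacency list in $H_{P,X}$ and mark the vertices of $P'\cup X'$ in a Boolean array indexed by $P\cup X$ (allocated once on entry to the call), then scan the edge list of $H_{P,X}$ once and copy each edge whose two endpoints are both marked into freshly-built adjacency lists for $H_{P',X'}$, finally unmarking to reset for the next $v$.

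For the running time, I would count edges of $H_{P,X}$: every edge has at least one endpoint in $P$, so there are at most $\binom{|P|}{2}+|P|\cdot|X| = O(|P|(|P|+|X|))$ of them. Each of the at most $|P|$ iterations spends $O(|P|+|X|)$ on marking/unmarking and $O(|P|(|P|+|X|))$ on the edge scan, giving the claimed total of $O(|P|^2(|P|+|X|))$.
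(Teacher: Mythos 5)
Your proof is correct and takes essentially the same approach as the paper's: identify $P'\cup X'$ from $v$'s adjacency list in $H_{P,X}$ in $O(|P|+|X|)$ time, then build each child graph by a single scan of the $O(|P|(|P|+|X|))$ edges of $H_{P,X}$, over the $O(|P|)$ iterations of the loop. You merely make explicit the structural claim (filtering by $\Gamma(v)$ suffices) and the edge-count bound that the paper's proof leaves implicit.
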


\begin{proof}
It takes $O(|P|+|X|)$ time to identify the subsets $P\cap\Gamma(v)$ and $X\cap\Gamma(v)$ by examining the neighbors of $v$ in $H_{P,X}$.
Once these sets are identified, $H_{P\cap\Gamma(v),X\cap\Gamma(v)}$ may be constructed as a subgraph of $H_{P,X}$ in time $O(|P|(|P|+|X|))$ by testing for each edge of $H_{P,X}$ whether its endpoints belong to these sets. There are $O(|P|)$ graphs to construct, one for each recursive call, hence the total time bound.
\end{proof}

\begin{lemma}[Theorem 3 of~\cite{tomita2006}]
\label{lemma-tomita-runtime}
Let $T$ be a function which satisfies the following recurrence relation:
\[ T(p) \leq \begin{cases}\max_{k}\{kT(p-k)\} + dp^2 &\text{if $p>0$}\\
         e &\text{if $p=0$}\end{cases}\]
Where $p$ and $k$ are integers, such that $p\geq k$, and $d,e$ are constants greater than zero.
Then, $T(p) \leq \max_{k}\{kT(p-k)\} + dp^2 = O(3^{p/3})$.
\end{lemma}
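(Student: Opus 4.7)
The plan is to prove this by strong induction on $p$, establishing a strengthened bound of the form $T(p) \le C \cdot 3^{p/3} - D p^2 - E$ for suitably chosen positive constants $C$, $D$, $E$ depending on $d$ and $e$. The extra subtracted polynomial piece is necessary because simply trying $T(p)\le C\cdot 3^{p/3}$ fails: the additive $dp^2$ term in the recurrence has nowhere to be absorbed, since for the worst value of $k$ the exponential part of the bound turns out to be tight.

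The first step is to locate the worst integer $k$ in the recurrence by analyzing the function $\varphi(k)=k\cdot 3^{-k/3}$. Treating $k$ as a real variable and differentiating, the continuous maximum occurs at $k=3/\ln 3\approx 2.73$, so on the integers it suffices to compare $k=2,3,4$. A direct calculation gives $\varphi(3)=1$ and $\varphi(k)<1$ for every other positive integer $k$, with a constant multiplicative gap. Consequently the exponential factor $\max_k k \cdot 3^{(p-k)/3}$ equals exactly $3^{p/3}$, achieved uniquely at $k=3$.

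Next I would carry out the inductive step. Substituting the hypothesis into the recurrence, the right-hand side splits into an exponential contribution $C \cdot 3^{p/3}\cdot \varphi(k)$ and a polynomial contribution coming from $-Dk(p-k)^2-kE+dp^2$. For $k=3$ the exponential contribution matches $C\cdot 3^{p/3}$ exactly, so the whole induction hinges on the polynomial balance
\begin{equation*}
3D(p-3)^2 + 3E - dp^2 \;\ge\; Dp^2 + E,
\end{equation*}
which after expanding $3(p-3)^2-p^2 = 2p^2-18p+27$ is satisfied for all sufficiently large $p$ once $D>d/2$ and $E$ is chosen large enough to swallow the linear and constant terms. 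For $k\ne 3$, the exponential part leaves a positive slack of order $3^{p/3}$, which trivially dominates the $O(p^2)$ polynomial error, so those branches impose no real constraint.

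The main obstacle is handling the boundary: for small $p$ the term $-Dp^2-E$ could drive the claimed upper bound below $T(p)$ (or even below zero), so the inductive hypothesis must be bootstrapped by choosing $C$ large relative to $e$ and verifying a finite list of base cases $p=0,1,2,3$ by hand. Once the constants $C, D, E$ are fixed consistently with both the inductive step and these base cases, the bound $T(p)\le C\cdot 3^{p/3}-Dp^2-E = O(3^{p/3})$ follows, which gives the claimed asymptotic estimate.
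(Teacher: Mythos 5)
Your proof is correct, but note that the paper does not actually prove this lemma: it is stated as Theorem~3 of Tomita et al.\ and used as a black box, so your argument is a genuinely self-contained replacement for a citation. The structure you use --- identify $k=3$ as the unique integer maximizer of $\varphi(k)=k\cdot 3^{-k/3}$, observe that the exponential part of the bound is then tight so the additive $dp^2$ must be absorbed by a strengthened hypothesis $T(p)\le C\cdot 3^{p/3}-Dp^2-E$, and exploit the constant multiplicative gap $1-\varphi(2)=1-2\cdot 3^{-2/3}>0$ for $k\ne 3$ --- is the standard way to solve this recurrence (and essentially the route of the cited source), and the quantitative requirements you list are all satisfiable: for the $k=3$ branch the needed inequality reduces to $(d-2D)p^2+18Dp-27D-2E\le 0$, which holds for every $p$ once, say, $D=d$ and $E\ge 27d$; then taking $C$ of order $d\cdot\sup_p p^2 3^{-p/3}/(1-\varphi(2))$ plus $e+E$ closes the remaining branches and the base case $T(0)=e\le C-E$. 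One small clarification: the $k\ne 3$ branches are not handled by base cases $p\le 3$ alone; the condition $(d+D)p^2\le (1-\varphi(2))\,C\,3^{p/3}$ must hold for \emph{all} $p\ge 1$, with the worst case near $p=5$ or $6$, but since $\sup_p p^2 3^{-p/3}$ is a finite constant this is absorbed into the choice of $C$ exactly as you indicate. In short, your approach buys the reader a self-contained proof at the cost of tracking three constants, where the paper buys brevity by deferring to the reference.
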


\begin{lemma}
\label{lemma-runtime}
Let $v$ be a vertex, $P_v$, be $v$'s later neighbors, and $X_v$ be
$v$'s earlier neighbors. Then \emph{BronKerboschPivot($P_v$, $\{v\}$, $X_{v}$)} executes
in time $O((d + \abs{X_v})3^{\abs{P_v}/3})$, excluding the time to report the discovered
maximal cliques.
\end{lemma}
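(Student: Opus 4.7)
The plan is to combine the Tomita et al.\ style recurrence (Lemma~\ref{lemma-tomita-runtime}) with the auxiliary-graph maintenance bounds of Lemmas~\ref{lem:hpx1}~and~\ref{lem:hpx2}, using a uniform bound on $|P|+|X|$ throughout the subtree rooted at the top-level call.

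First I would handle the cost of forming the initial auxiliary graph $H_{P_v,X_v}$ before entering the recursion. By Lemma~\ref{lem:hpx1} this takes time $O(d(|P_v|+|X_v|))$, which, since $|P_v|\le d$, is $O(d(d+|X_v|))$ and will be dominated by the recursive bound we derive next.

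Next I would analyze the recursion itself. The key observation is that the sets $P$ and $X$ passed to any recursive call below the top one are obtained by repeated intersection with neighborhoods of chosen vertices, so at every level of the recursion tree we have $|P|+|X|\le |P_v|+|X_v|\le d+|X_v|$. Combining this with Lemma~\ref{lem:hpx2}, the work done at a single recursive call (to compute all the child auxiliary graphs, and to identify the restricted $P$ and $X$ sets for pivoting) is $O(|P|^2(|P|+|X|)) = O(|P|^2(d+|X_v|))$. Selecting the Tomita pivot from the auxiliary graph takes time $O(|P|+|X|) = O(d+|X_v|)$ per call, which is absorbed. Letting $T(p)$ bound the time of a call with $|P|=p$ (excluding output), the pivoting rule of Tomita et al.\ gives the branching inequality
\[
T(p) \le \max_{k}\{\,k\,T(p-k)\,\} + c\,(d+|X_v|)\,p^2
\]
for some constant $c$ and for $p>0$, with $T(0)=O(1)$.

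Finally I would invoke Lemma~\ref{lemma-tomita-runtime} with the ``constant'' taken to be $c(d+|X_v|)$ (this quantity is indeed fixed across the entire recursion tree rooted at the top call, so the hypothesis is satisfied). This yields $T(p)=O((d+|X_v|)\,3^{p/3})$, and specializing to $p=|P_v|$ gives the claimed $O((d+|X_v|)\,3^{|P_v|/3})$ bound; adding the initial $H_{P_v,X_v}$ construction cost does not change the asymptotics. The only real subtlety—and the step I expect to require the most care to state cleanly—is the uniform bound $|P|+|X|\le d+|X_v|$ throughout the recursion, which is what lets us pull the $(d+|X_v|)$ factor outside the $3^{|P_v|/3}$ and thus apply Tomita's recurrence directly.
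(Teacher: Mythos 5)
Your proposal is correct and follows essentially the same route as the paper: set up the Tomita-style recurrence with per-node work $O(|P|^2(|P|+|X|))$ from Lemma~\ref{lem:hpx2}, use $|P|+|X|\le d+|X_v|$ (the paper writes this as $p+x\le d+x$) to pull the factor $(d+|X_v|)$ out, and apply Lemma~\ref{lemma-tomita-runtime}. The only cosmetic difference is that you fold the initial construction of $H_{P_v,X_v}$ into this lemma (where it is not strictly dominated when $|P_v|$ is tiny), whereas the paper charges that $O(d(|P_v|+|X_v|))$ cost separately in Theorem~\ref{theorem-totaltime}.
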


\begin{proof}
Define $D(p,x)$ to be the running time of BronKerboschPivot($P_v$, $\{v\}$, $X_v$),
where $p = \abs{P_v}$, and $x=\abs{X_v}$. We show that $D(p,x) =O((d+x)3^{p/3})$.
By the description of BronKerboschPivot, $D$ satisfies the following recurrence relation:

\[ D(p,x) \leq \begin{cases}\max_{k}\{kD(p-k,x)\} + c_1p^2(p+x) &\text{if $p>0$}\\
         c_2 &\text{if $p=0$}\end{cases}\]
where $c_1$ and $c_2$ are constants greater than 0.

Since our graph has degeneracy $d$, the inequality $p+x\leq d+x$ always holds. Thus,
\begin{align*}D(p,x) &\leq \max_{k}\{kD(p-k,x)\} + c_1p^2(p+x) \\
                     &\leq (d+x)\left( \max_{k}\left\{\frac{kD(p-k,x)}{d+x}\right\} + c_1p^2\right) \\
                     &\leq (d+x)\left(\max_{k}\{kT(p-k)\} + c_1p^2\right) \\
                     &=O((d+x)3^{p/3})\text{\quad by letting $d=c_1,e=c_2$ in Lemma~\ref{lemma-tomita-runtime}}
\end{align*}
\end{proof}

%\begin{proof}
%Let $T(p,x)$ be the running time of BronKerboshPivot when $p=\abs{P}$ and $x =\abs{X}$.
%Then $T(p,x)$ is bounded above by the recurrence
%\[T(p,x) \leq \begin{cases}\max_{k}\{kT(p-k,x)\} + c_1p^2(p+x) &\text{if $p>0$}\\
%         c_2 &\text{if $p=0$}\end{cases}\]
%where $c_1$ and $c_2$ are constants greater than zero.
%
%We prove by induction that $T(p,x)\leq (d+x)(C3^{p/3} - Q(p))$, where $Q(p) = (c_1/2)p^2 + (9c_1/2)p + 27c_1/2$, and $C=\max\{\frac{c_1 + c_2 + Q(1)}{3^{1/3}}, \max_x\{3(1-2\cdot3^{-2/3})\cdot Q(x-3)/3^{x/3}\}$
%
%Base case: For $p=1$,
%\[T(1,x) \leq c_2 + c_1(1+x) \leq (d+x)\left(\frac{c_1 + c_2 + Q(1)}{3^{1/3}}3^{1/3} - Q(1)\right) = (d+x)(C3^{p/3} - Q(p))\]
%Inductive step: Assume that for all $1\leq j\leq i-1$, the inequality $T(j,x) \leq C(d+x)3^{j/3} - q(j)$ holds. If $k=3$ then
%\begin{align*}
%T(i,x) &\leq 3T(i-3) + c_1i^2(i+x) \\
%       &\leq 3(d+x)(C3^{(i-3)/3} -Q(i-3)) + c_1i^2(i+x)&\text{by the inductive hypothesis} \\
%       &\leq (d+x)(C3^{i/3} - (3Q(i-3) - c_1i^2)) \\
%       &= (d+x)(C3^{i/3} - Q(i))&\text{by design of $Q(\cdot)$}
%\end{align*}
%For $k\neq3$ (with $k\geq 1$)
%
%\[T(i,x) \leq (d+x)(C3^{(i-k)/3} - (kQ(i-k) - c_1i^2) \leq (d+x)(C3^{i/3} - (3Q(i-3) - c_1i^2))\]
%since rearranging this inequality gives us
%\[\frac{3Q(i-3) -kQ(i-k)}{(1-k3^{-k/3})\cdot3^{i/3}} \leq \frac{3Q(n-3}{(1-2\cdot3^{-2/3})\cdot3^i/3} \leq C\]
%\end{proof}

\begin{theorem}
\label{theorem-totaltime}
Given a $n$-vertex graph $G$ with degeneracy $d$, our algorithm
reports all maximal cliques of $G$ in time $O(dn3^{d/3})$.
\end{theorem}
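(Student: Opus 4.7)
The plan is to bound the total running time as the sum of (i) preprocessing (degeneracy ordering plus construction of the top-level auxiliary graphs $H_{P,X}$), (ii) the cost of the recursive BronKerboschPivot calls given by Lemma~\ref{lemma-runtime}, and (iii) the cost of writing the output. The key structural fact I would exploit is that in a degeneracy ordering, vertex $v_i$ has at most $d$ later neighbors, so the set $P$ passed to each top-level call has $\abs{P}\le d$. This single observation is what turns the worst-case $3^{n/3}$ behavior of pivoting into a $3^{d/3}$ bound per starting vertex.

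First I would handle preprocessing. The degeneracy ordering can be computed in $O(n+m)$ time as described in the Preliminaries, and for each $i$ the sets $P$ and $X$ can be extracted in $O(\deg(v_i))$ time from the adjacency lists of $v_i$. By Lemma~\ref{lem:hpx1}, building $H_{P,X}$ for the call associated with $v_i$ takes $O(d(\abs{P}+\abs{X}))$ time; summing $\abs{P}+\abs{X}=\deg(v_i)$ over all $i$ yields $O(d\cdot 2m)$, which by the edge-count lemma is $O(d^2 n)$.

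Next I would sum the recursive costs. By Lemma~\ref{lemma-runtime}, the call BronKerboschPivot($P$, $\{v_i\}$, $X$) runs in time $O((d+\abs{X_{v_i}})3^{\abs{P_{v_i}}/3})$ excluding output. Using $\abs{P_{v_i}}\le d$ this is $O((d+\abs{X_{v_i}})3^{d/3})$, and summing over $i$ gives
\[
O\!\left(3^{d/3}\sum_{i}(d+\abs{X_{v_i}})\right) = O\!\left(3^{d/3}(dn + m)\right) = O(dn\,3^{d/3}),
\]
since $\sum_i \abs{X_{v_i}} = m \le d(n-\tfrac{d+1}{2})$ by the edge-count lemma cited from Lick--White.

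Finally I would account for output: each maximal clique has size at most $d+1$ (since the maximum clique size in a $d$-degenerate graph is at most $d+1$), so reporting a clique takes $O(d)$ time. Once the matching upper bound of $(n-d)3^{d/3}$ maximal cliques (promised elsewhere in the paper) is invoked, the total reporting time is $O(d(n-d)3^{d/3}) = O(dn\,3^{d/3})$, matching the other two terms. The only subtle step is lining up the per-call bound of Lemma~\ref{lemma-runtime} with a global sum, and making sure the $\sum \abs{X_{v_i}}=m$ identity combines cleanly with the edge bound; everything else is bookkeeping.
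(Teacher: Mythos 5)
Your proposal is correct and follows essentially the same decomposition as the paper's proof: preprocessing/subgraph setup summing to $O(d^2n)$, recursive costs summing via Lemma~\ref{lemma-runtime} to $O((dn+m)3^{d/3})=O(dn3^{d/3})$, and output cost $O(d\mu)$ bounded using the worst-case clique count from Section~\ref{sec:num-cliques}. The extra bookkeeping you supply (the $\sum_i\abs{X_{v_i}}\le m$ identity and the Lick--White edge bound) just makes explicit what the paper leaves implicit.
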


\begin{proof}
For each initial call to BronKerboschPivot for each
vertex $v$, we first spend time $O(d(\abs{P_v}+\abs{X_v}))$ to
set up subgraph $H_{P_v,X_v}$. Over the entire algorithm
we spend time
\[\sum_vO(d(\abs{P_v}+\abs{X_v})) = O(dm) = O(d^2n)\]
setting up these subgraphs. The time spent performing the recursive calls is
\[\sum_vO((d+\abs{X_v})3^{\abs{P_v}/3}) = O((dn+m)3^{d/3}) = O(dn3^{d/3}),\]
and the time to report all cliques is $O(d\mu)$, where $\mu$ is the number
of maximal cliques. We show in the next section that $\mu=(n-d)3^{d/3}$
in the worst case, and therefore we take time $O(d(n-d)3^{d/3})$ reporting
cliques in the worst case. Therefore, the algorithm executes in time $O(dn3^{d/3})$.
\end{proof}

This running time is nearly worst-case optimal, since there may be
$\Theta((n-d)3^{d/3})$ maximal cliques in the worst case.

\section{Worst-case Bounds on the Number of Maximal Cliques}
\label{sec:num-cliques}
\begin{theorem}
\label{thm:num-cliques}
Let $d$ be a multiple of 3 and $n\ge d+3$.
Then the largest possible number of maximal cliques in an $n$-vertex graph with degeneracy $d$ is $(n-d)3^{d/3}$.
\end{theorem}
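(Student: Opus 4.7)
The plan is to prove the theorem in two halves: the upper bound $(n-d)3^{d/3}$ by induction on $n \geq d+3$, and the matching lower bound via an explicit construction.

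For the upper bound, the base case $n = d+3$ follows from the classical Moon--Moser theorem, which bounds the number of maximal cliques in any $n$-vertex graph by $3^{n/3}$; for $n=d+3$ this gives $3^{(d+3)/3} = 3 \cdot 3^{d/3} = (n-d) 3^{d/3}$, as required. For the inductive step, I would take $v$ to be the first vertex in a degeneracy ordering of $G$, so all neighbors of $v$ come later and $|\Gamma(v)| \leq d$. I split the maximal cliques of $G$ according to whether they contain $v$. Those containing $v$ biject with the maximal cliques of the induced subgraph $G[\Gamma(v)]$ (via $C \mapsto C \setminus \{v\}$); this subgraph has at most $d$ vertices, so by Moon--Moser there are at most $3^{d/3}$ such cliques. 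Those not containing $v$ remain maximal in $G \setminus \{v\}$, which has $n-1 \geq d+3$ vertices and degeneracy at most $d$, so the induction hypothesis supplies at most $(n-1-d) 3^{d/3}$ such cliques. Summing gives $(n-d) 3^{d/3}$.

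For the lower bound, the plan is to exhibit a concrete $n$-vertex, degeneracy-$d$ graph with exactly $(n-d) 3^{d/3}$ maximal cliques. Take the complete multipartite graph $K_{3,3,\ldots,3}$ on $d$ vertices ($d/3$ independent triples with all cross-triple edges present), and adjoin $n-d$ pairwise non-adjacent ``apex'' vertices, each made adjacent to every multipartite vertex. Peeling the apexes first (each of degree exactly $d$ at removal time), then the multipartite vertices, gives an ordering witnessing degeneracy at most $d$; conversely, every apex has degree $d$ in the full graph, so the degeneracy is exactly $d$. A set is a maximal clique precisely when it consists of one apex together with a transversal picking one vertex from each triple: two apexes cannot coexist (non-adjacent), two vertices from the same triple cannot coexist (non-adjacent), every transversal can be extended by any apex, and every apex-plus-transversal is saturated. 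This yields $(n-d) \cdot 3^{d/3}$ maximal cliques.

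The main obstacle is the two maximality-preserving claims in the inductive step: that the restriction of a maximal clique containing $v$ to $\Gamma(v)$ is maximal in $G[\Gamma(v)]$, and that a maximal clique of $G$ not containing $v$ is maximal in $G \setminus \{v\}$. Both are routine: any hypothetical extending vertex in the smaller graph would also extend the clique in $G$, contradicting the clique's maximality there. Checking that $G \setminus \{v\}$ has degeneracy at most $d$ (it inherits a degeneracy ordering by deletion) is similarly immediate.
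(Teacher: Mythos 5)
Your proposal is correct and takes essentially the same approach as the paper: the upper bound charges each maximal clique to its earliest vertex in a degeneracy ordering and applies Moon--Moser to that vertex's at most $d$ later neighbors, with your $(d+3)$-vertex base case playing the role of the paper's final $d+3$ vertices, so your induction on $n$ is just the unrolled form of the paper's direct summation. Your lower-bound construction is exactly the paper's graph $K_{n-d,3,3,\ldots,3}$.
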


\begin{proof} We first show that there cannot be more than $(n-d)3^{d/3}$
maximal cliques. We then show that there exists a graph that has $(n-d)3^{d/3}$ maximal cliques.

\paragraph{An Upper Bound.}
Consider a degeneracy ordering
of the vertices, in which each vertex has at most $d$ neighbors
that come later in the ordering. For each vertex~$v$ that is placed among the first $n-d-3$ vertices of the degeneracy ordering, we count the number of maximal
cliques such that $v$ is the clique vertex that comes first in the ordering.

Since vertex $v$ has at
most $d$ later neighbors, the Moon--Moser bound~\cite{moon-moser65} applied to the subgraph induced by these later neighbors shows that they can form at most $3^{d/3}$ maximal cliques with each
other. Since these vertices are all neighbors of $v$, $v$
participates in at most $3^{d/3}$ maximal cliques with its later neighbors. Thus,
the first $n-d-3$ vertices contribute to at most $(n-d-3)3^{d/3}$ maximal
cliques total.

By the Moon--Moser bound, the remaining $d+3$ vertices
in the ordering can form at most $3^{(d+3)/3}$ maximal cliques.
Therefore, a graph with degeneracy $d$ can have no more than
$(n-d-3)3^{d/3} + 3^{(d+3)/3} = (n-d)3^{d/3}$ maximal cliques.

\paragraph{A Lower Bound.}
By a simple counting argument, we can see that the graph $K_{n-d,3,3,3,3,\ldots}$
contains $(n-d)3^{d/3}$ maximal cliques: Each maximal clique must contain exactly one vertex from
each disjoint independent set of vertices, and there are $(n-d)3^{d/3}$ ways
of forming a maximal clique by choosing one vertex from each independent set.
Figure~\ref {fig:lower-bound} shows this construction for $d = 6$.
We can also see that this graph is $d$-degenerate since in any ordering of the vertices,
the first vertex must have $d$ or more later neighbors, and in any ordering
where the $n-d$ disjoint vertices come first, these first $n-d$ vertices
have exactly $d$ later neighbors, and the last $d$ vertices have fewer later
neighbors.
\end{proof}

\eenplaatje {lower-bound} {The lower bound construction for $d = 6$, consisting of a Moon--Moser graph of size $d$ on the right (blue vertices) and an independent set of $n - d$ remaining vertices that are each connected to all of the last $d$ vertices.}

Relatedly, a bound of $(n-d+1)2^d$ on the number of cliques (without assumption of maximality) in $n$-vertex $d$-degenerate graphs was already known~\cite{Woo-GC-07}.

\section{Comparison with Other Algorithms}
\label{section:compare}

Chiba and Nishizeki~\cite{chiba85} describe two algorithms for finding cliques in sparse graphs. The first of these two algorithms reports all maximal
cliques using $O(am)$ time per clique, where $a$ is the arboricity
of the graph, and $m$ is the number of edges in $G$. The \emph{arboricity}
is the minimum number of edge-disjoint spanning forests into which the
graph can be decomposed~\cite{harary72}. The degeneracy of a graph is
closely related to arboricity: $a\le d\le 2a-1$.
In terms of degeneracy, Chiba and Nishizeki's algorithm
uses $O(d^2n)$ time per clique. Combining this with the bound on
the number of cliques derived in Section~\ref{sec:num-cliques} results in a worst-case time bound of $O(d^2n(n-d)3^{d/3})$. For constant $d$, this is a quadratic time bound, in contrast to the linear time of our algorithm.

Another algorithm of Chiba and Nishizeki~\cite{chiba85} lists cliques of order $l$ in time $O(la^{l-2}m)$. It can be adapted to enumerate all maximal cliques in a graph
with degeneracy $d$ by first enumerating all cliques of order
$d+1$, $d$, $\dots$ down to $1$, and removing cliques that are not maximal.
Applying their algorithm directly to a $d$-degenerate graph takes time $O(ld^{l-1}n)$.
Therefore, the running time to find all maximal cliques is
$\sum_{1\leq i\leq d+1} O(ind^{i-1}) = O(nd^{d+1}).$
Like our algorithm, this is linear when $d$ is constant, but with a much worse dependence on the parameter~$d$.

Chrobak and Eppstein~\cite{chrobak91}
list triangles and 4-cliques in graphs of bounded degeneracy
by testing all sets of two or three later neighbors of each vertex according to a degeneracy ordering.
The same idea extends in an obvious way to finding maximal cliques of size greater than four, by testing all subsets of later neighbors of each vertex. For each vertex $v$, there are at most $2^d$ subsets to test; each subset may be tested for being a clique in time $O(d^2)$, by checking whether each of its vertices has all the later vertices in the subset among its later neighbors, giving a total time of $O(nd^2 2^d)$ to list all the cliques in the graph. However, although this singly-exponential time bound is considerably faster than Chiba and Nishizeki, and is close to known bounds on the number of (possibly non-maximal) cliques in $d$-degenerate graphs~\cite{Woo-GC-07}, it is slower than our algorithm by a factor that is exponential in~$d$. Our new algorithm uses this same idea of searching among the later neighbors in a degeneracy order but achieves much greater efficiency by combining it with the Bron--Kerbosch algorithm.

%A reviewer of an earlier version of this paper suggested another alternative algorithm that applies the pivoting Bron--Kerbosch algorithm directly to the sets of outgoing neighbors of each vertex $v_i$ in a degeneracy ordering of a graph, and then tests the resulting cliques for maximality in the full set of neighbors of $v_i$, rather than passing a non-empty set $X$ to each invocation of the pivoting Bron--Kerbosch algorithm. However, although this algorithm is simpler to describe in terms of previously published algorithms, it is not actually any simpler to implement than the algorithm we describe, and is slower than our algorithm by an $O(d)$ factor.

\section{Conclusion}

We have presented theoretical evidence for the fast performance of the Bron--Kerbosch algorithm for finding cliques in graphs, as has been observed in practice. We observe that the problem is fixed-parameter tractable in terms of the degeneracy of the graph, a parameter that is expected to be low in many real-world applications, and that a slight modification of the Bron--Kerbosch algorithm performs optimally in terms of the degeneracy.

We explicitly prescribe the order in which the Bron--Kerbosch algorithm processes the vertices of the graph, something that has not been considered before. Without this particular order, we do not have a bound on the running time. It would be interesting to determine whether a random order gives similar results, as this would further explain the observed performance of implementations of Bron--Kerbosch that do not use the degeneracy order.

%It would also be of interest to determine the largest possible number of maximal cliques among all $n$-vertex graphs with degeneracy $d$ (or, slightly differently, at most $d$), under less restrictive assumptions than those made in Theorem~\ref{thm:num-cliques}.

\section*{Acknowledgments}
This research was supported in part by the National Science
Foundation under grant 0830403, and by the
Office of Naval Research under MURI grant N00014-08-1-1015.

{\raggedright
\bibliographystyle{abuser}
\bibliography{bron-kerbosch}
}

\clearpage
\begin{appendix}

\section{Appendix}
\label{sec:appendix}
\subsection{The Degeneracy of Protein--Protein Interaction Networks}
The Biological General Repository for Interaction Datasets (BioGRID)~\cite{BioGRID2006}~\footnote{\url{http://thebiogrid.org/}} is a 
curated database containing the interactions between proteins that have 
been published in the literature, and those that have been discovered 
through high-throughput screening methods. Using this interaction data, 
we can form a graph by creating a vertex for each protein and an edge 
between two proteins that interact. Such a graph is called a protein--protein 
interaction (PPI) network. 

We computed the degeneracy of seven PPI networks in version 3.0.65 
of the BioGRID database. We chose to omit other networks in the BioGRID 
database because they contain only a handful of known proteins and 
interactions. The results are summarized in Table~\ref{table:ppi}. Observe that,
for each PPI network, the degeneracy is significantly lower than both the 
number of vertices in the graph and the maximum degree of the graph.
We therefore conclude that the degeneracy of these PPI networks is low and that fixed-parameter algorithms parametrized by degeneracy can be expected to perform well on these graphs.
\begin{table}
\begin{center}
\caption{Graph statistics for seven PPI networks from 
version 3.0.65 of the BioGRID database. %~\cite{BioGRID2006}. 
For each network we list the number of vertices ($n$), the number of edges 
($m$), the maximum degree ($\Delta$), and the degeneracy ($d$).}
\label{table:ppi}
\begin{tabular}{l|c|c|c|c}
Organism (PPI network)      & $n$ & $m$ & $\Delta$ & $d$\\ \hline
Mus musculus                & 1455 & 1636   & 111  & 6 \\
Caenorhabditis elegans      & 3518 & 6531   & 523  & 10 \\
Arabisopsis thaliana        & 1745 & 3098   & 71   & 12 \\
Drosophila melanogaster     & 7282 & 24894  & 176  & 12 \\
Homo Sapiens                & 9527 & 31182  & 308  & 12 \\
Schizosaccharomyces pombe   & 2031 & 12637  & 439  & 34 \\
Saccharomyces cerevisae     & 6008 & 156945 & 2557 & 64 
\end{tabular}
\end{center}
\end{table}

\end{appendix}

\end{document}